\DeclareMathOperator{\prob}{{\text{\rm P}}}
\newcommand{\dr}{{\tt r}}
\newcommand{\D}[2]{\frac{\partial #1}{\partial #2}}
\newcommand{\un}[1]{\underline{#1}}
\newcommand{\brc}[1]{\left({#1}\right)}
\newcommand{\abs}[1]{\left\lvert#1\right\rvert}
\newcommand{\norm}[1]{\left\lVert#1\right\rVert}
\newcommand{\nae}{\text{\tiny NAE}}
\newcommand{\openone}{\leavevmode\hbox{\small1\normalsize\kern-.33em1}}
\begin{document}
\renewcommand{\textfraction}{0}

\title{\Large{Bounds on Threshold of Regular Random $k$-SAT}}
\author{Vishwambhar Rathi\inst{1,}\inst{2} \and Erik Aurell\inst{1,}\inst{3} \and Lars Rasmussen\inst{1,}\inst{2} \and Mikael Skoglund\inst{1,}\inst{2} 
\thanks{Email: vish@kth.se, eaurell@kth.se, lars.rasmussen@ee.kth.se, skoglund@ee.kth.se.}} 
\institute{KTH Linnaeus Centre ACCESS, KTH-Royal Institute of Technology, Stockholm, Sweden \and School of Electrical Engineering, KTH-Royal Institute of Technology, Stockholm, Sweden \and Dept. Information and Computer Science, TKK-Helsinki University of Technology, Espoo, Finland}
\institute{KTH Linnaeus Centre ACCESS, KTH-Royal Institute of Technology, Stockholm, Sweden \and School of Electrical Engineering, KTH-Royal Institute of Technology, Stockholm, Sweden \and Dept. Information and Computer Science, TKK-Helsinki University of Technology, Espoo, Finland}

\maketitle

\begin{abstract}
We consider the regular model of formula generation in conjunctive normal form
(CNF) introduced by Boufkhad et. al. in \cite{BDIS05}. In \cite{BDIS05}, it was
shown that the threshold for regular random $2$-SAT  is equal to unity. Also,
upper and lower bound on the threshold for regular random $3$-SAT were derived.
Using the first moment method, we derive an upper bound on the threshold for
regular random $k$-SAT for any $k \geq 3$ and show that for large $k$ the
threshold is upper bounded by $2^k \ln (2)$.  We also derive upper bounds on
the threshold for Not-All-Equal (NAE) satisfiability for $k \geq 3$ and show that for large $k$, the
NAE-satisfiability threshold is upper bounded by $2^{k-1} \ln(2)$. For both 
satisfiability and NAE-satisfiability, the obtained upper bound 
matches with the corresponding bound for the uniform model of formula
generation \cite{FrP83, AcM06}. 


For the uniform model, in a series of break through papers Achlioptas, Moore,
and Peres showed that a careful application of the second moment method yields
a significantly better lower bound on threshold as compared to any rigorously
proven algorithmic bound \cite{AcP04, AcM06}. The second moment method shows
the existence of a satisfying assignment with uniform positive probability
(w.u.p.p.).  Thanks to the result of Friedgut for uniform model \cite{Fri99},
existence of a satisfying assignment w.u.p.p.~translates to existence of a
satisfying assignment with high probability (w.h.p.). Thus, the second moment
method gives a lower bound on the threshold. As there is no known Friedgut type
result for regular random model, we assume that for regular random model
existence of a satisfying assignments w.u.p.p.~translates to existence of a
satisfying assignments w.h.p. We derive the second moment of
the number of satisfying assignments for regular random $k$-SAT for $k \geq 3$.
There are two aspects in deriving the lower bound using the second moment
method. The first aspect is given any $k$, numerically evaluate the lower
bound on the threshold. The second aspect is to derive the lower bound as a
function of $k$ for large enough $k$.  We address the first aspect and evaluate
the lower bound on threshold.  The numerical evaluation suggests that as $k$
increases the obtained lower bound on the satisfiability threshold of a regular
random formula converges to the lower bound obtained for the uniform model.
Similarly, we obtain lower bounds on the NAE-satisfiability threshold of the
regular random formulas and observe that the obtained lower bound seems to
converge to the corresponding lower bound  for the uniform model as $k$
increases. 
\end{abstract}

\section{Regular Formulas and Motivation}

A clause is a disjunction (OR) of $k$ variables. A formula is a conjunction
(AND) of a finite set of clauses. A $k$-SAT formula is a formula where each
clause is a disjunction of $k$ literals.  A {\it legal} clause is one in which
there are no repeated or complementary literals. Using the terminology of
\cite{BDIS05}, we say that a formula is {\it simple} if it consists of only
legal clauses. A {\it configuration} formula is not necessarily legal. A
satisfying (SAT) assignment of a formula is a truth assignment of variables for
which the formula evaluates to true. A Not-All-Equal (NAE) satisfying
assignment is a truth assignment such that every clause is connected to at
least one true literal and at least one false literal. We denote the number of
variables by $n$, the number of clauses by $m$, and the clause density, i.e.
the ratio of clauses to variables, by $\alpha = \frac{m}{n}$.  We denote the
binary entropy function by $h(\cdot)$, $h(x) \triangleq -x \ln(x) - (1-x)
\ln(1-x)$, where the logarithm is the natural logarithm.

The popular, uniform k-SAT model generates a formula by selecting uniformly and
independently $m$-clauses from the set of all $2^k \binom{n}{k}$ $k$-clauses. In
this model, the literal degree can vary. We are interested in the model where
the literal degree is almost constant, which was introduced in
\cite{BDIS05}. Suppose each literal has degree $r$.  Then $2 n r = k m$, 
which gives $\alpha = 2 r / k$. Hence $\alpha$ can only take values from a 
discrete set of possible values.  To circumvent this, we allow each literal to 
take two possible values for a degree.  For a given $\alpha$, let $r =  \frac{k
\alpha}{2}$ and $\dr = \lfloor r \rfloor$.  Each literal has degree either
$\dr$ or $\dr + 1$. Also a literal and its negation have the same degree. Thus, we
can speak of the degree of a variable which is the same as the degree of its
literals. Let the number of variables with degree $d$ be $n_d$, $d \in \{\dr,
\dr+1\}$. Let $X_1,\dots,X_{n_{\dr}}$ be the variables which have degree $\dr$
and $X_{n_{\dr} + 1}, \dots, X_n$ be the variables with degree $\dr+1$. Then, 
\[
n_{\dr} = n + \dr n - \left\lfloor\frac{k \alpha n}{2}\right \rfloor, \quad n_{\dr + 1} = \left\lfloor \frac{k \alpha n}{2} \right\rfloor - \dr n
\]
and $n_\dr + n_{\dr+1} = n$. 
As we are interested in the asymptotic setting, we will ignore the floor in the sequel. 
We denote the fraction of variables with degree $\dr$ (resp. $\dr+1$) by 
$\Lambda_\dr$ (resp. $\Lambda_{\dr+1}$) which is given by
\begin{equation}\label{eq:defLambda}
\Lambda_\dr = 1+\dr-\frac{k \alpha}{2}, \quad \Lambda_{\dr+1} = \frac{k \alpha}{2} - \dr.
\end{equation}
When $\Lambda_\dr$ or $\Lambda_{\dr+1}$ is zero, we refer to such formulas as
{\it strictly} regular random formulas. This implies that there is no variation
in literal degree. If $\Lambda_{\dr}, \Lambda_{\dr+1} > 0$, then we say that
the formulas are $2$-regular random formulas. 

A formula is represented by a
bipartite graph. The left vertices represent the literals and right vertices
represent the clauses. A literal is connected to a clause if it appears in the
clause.  There are $k \alpha n$ edges coming out from all the literals and $k
\alpha n$ edges coming out from the clauses. We assign the labels from the set
$\mathcal{E}= \{1, \dots, k \alpha n\}$ to edges on both sides of the 
bipartite graph. In order to generate a formula, we generate a random
permutation $\Pi$ on $\mathcal{E}$. Now we connect an edge $i$ on the literal 
node side to an edge $\Pi(i)$ on the clause node side. This gives rise to a regular
random k-SAT formula. Note that not all the formulas generated by this
procedure are simple. However, it was shown in \cite{BDIS05} that the
threshold is the same for this collection of formulas and the collection of simple
formulas. Thus, we can work with the collection of configuration formulas
generated by this procedure. 

The regular random $k$-SAT formulas are of interest because such instances are
computationally harder than the uniform $k$-SAT instances. This was
experimentally observed in \cite{BDIS05}, where the authors also derived upper
and lower bounds for regular random $3$-SAT. The upper bound was derived using
the first moment method. The lower bound was derived by analyzing a greedy algorithm 
proposed in \cite{KKL02}.  To the best of our knowledge, there are no known
upper and lower bounds on the thresholds for regular random formulas for $k
 > 3$. 

Using the first moment method, we compute an upper bound $\alpha_u^*$ on the
satisfiability threshold $\alpha^*$ for regular random formulas for $k \geq 3$.
We show that $\alpha^* \leq 2^k \ln(2)$, which coincides with the upper bound
for the uniform model. We also apply the first moment method to obtain an upper bound 
$\alpha_{u, \text{\tiny NAE}}^*$ on the NAE-satisfiability threshold
$\alpha_{\text{\tiny NAE}}^*$ of regular random formulas. We show that
$\alpha_{\text{\tiny NAE}}^* \leq 2^{k-1} \ln(2)$ which coincides with the
corresponding bound for the uniform model.  

In order to derive a lower bound $\alpha_l^*$ on the threshold, we apply the
second moment method to the number of satisfying assignments. The second moment
method shows the existence of a satisfying assignment with uniform positive 
probability (w.u.p.p.).  Due to the result of Friedgut for uniform model
\cite{Fri99}, existence of a satisfying assignment w.u.p.p.~translates to
existence of a satisfying assignment with high probability (w.h.p.). Thus, the
second moment method gives lower bound on the threshold for uniform model. As
there is no known Friedgut type result for regular random model, we assume that
for regular random model existence of a satisfying assignments
w.u.p.p.~translates to existence of a satisfying assignments w.h.p. This
permits us to say that second moment method gives valid lower bound on the
threshold. We compute the second moment of the number of satisfying assignments
for regular random model. Similar to the case of the uniform model, we show
that for the second moment method to succeed the term corresponding to overlap
$n/2$  should dominate other overlap terms. We observe that the obtained lower
bound $\alpha_l^*$ converges to the corresponding lower bound of the uniform
model, which is $2^k \ln (2) - (k+1)\frac{\ln(2)}{2} - 1$ as $k$  increases.
Similarly, by computing the second moment of the number of NAE-satisfying
assignments we obtain that $\alpha_{l,\text{\tiny NAE}}$ converges to the
corresponding bound $2^{k-1} \ln(2) - O(1)$  for the uniform model. The lower
bounds are not obtained explicitly as computing the second moment requires
finding all the positive solutions of a system of polynomial equations. For
small values of $k$, this can be done exactly. However, for large values of $k$
we resort to a numerical approach. Our main contribution is that we obtain
almost matching lower and upper bounds on the satisfiability (resp.
NAE-satisfiability) threshold for the regular random formulas.  Thus, we answer
in affirmative the following question posed in \cite{AcM06}: {\it Does the
second moment method perform well for problems that are symmetric ``on
average''? For example, does it perform well for regular random $k$-SAT where
every literal appears an equal number of times?}.

In the next section, we obtain an upper bound on the satisfiability threshold and
NAE satisfiability threshold. 

\section{Upper Bound on Threshold via First Moment}
Let $X$ be a non-negative integer-valued random variable and $E(X)$ be its expectation. 
Then the first moment method gives: $\prob\brc{X > 0} \leq E(X)$. 
Note that by choosing $X$ to be the number of solutions of a random formula, we can 
obtain an upper bound on the threshold $\alpha^*$ beyond which no solution exists 
with probability one. This upper bound corresponds to the largest value of $\alpha$ at which 
 the average number of solutions goes to zero as $n$ tends to infinity. 
In the following lemma, we derive the first moment of the number of SAT solutions  
of the regular random $k$-SAT for $k \geq 3$. 
\begin{lemma}
Let $N(n, \alpha)$ (resp. $N_{\text{\tiny NAE}}(n, \alpha)$) be the number of satisfying (resp. NAE satisfying) 
assignments for a randomly generated regular k-SAT 
formula. Then\footnote{We assume that $k \alpha n$ is an even integer.}, 
\begin{equation}\label{eq:moment1}
E(N(n, \alpha)) =  2^n \frac{\brc{\brc{\frac{k \alpha n}{2}}!}^2}{(k \alpha n)!} 
\mathrm{coef}\brc{\brc{\frac{p(x)}{x}}^{\alpha n}, x^{\frac{k \alpha n}{2}-\alpha n}}, 
\end{equation}
\begin{equation}\label{eq:moment1nae}
E(N_{\text{\tiny NAE}}(n, \alpha)) = 2^n \frac{\brc{\brc{\frac{k \alpha n}{2}}!}^2}{(k \alpha n)!} 
\mathrm{coef}\brc{\brc{\frac{p_{\text{\tiny NAE}}(x)}{x}}^{\alpha n}, x^{\frac{k \alpha n}{2}-\alpha n}}, 
\end{equation}
where 
\begin{equation}\label{eq:defpx}
	p(x) = (1+x)^k - 1, \quad p_{\text{\tiny NAE}}(x) = (1+x)^k-1-x^k,  
\end{equation}
and $\mathrm{coef}\brc{p(x)^{\alpha n}, x^{\frac{k \alpha n}{2}}}$ denotes the coefficient of $x^{\frac{k \alpha n}{2}}$ in the 
expansion of $p(x)^{\alpha n}$. 
\end{lemma}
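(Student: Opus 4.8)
The plan is to compute the expectation by linearity over all $2^n$ truth assignments, to exploit a symmetry of the regular model that makes every assignment equally likely to be satisfied, and then to count the favourable edge-permutations with a generating-function argument.

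First I would write $E(N(n, \alpha)) = \sum_{\sigma} \prob\brc{\sigma \text{ satisfies } \Pi}$, the sum ranging over all $2^n$ assignments $\sigma$. The central observation is that, because a variable and its negation carry the same degree, for every $\sigma$ the set of literal-edges that evaluate to true has total size exactly $\sum_i d_i = k\alpha n/2$, independent of $\sigma$: for each variable precisely one of its two equal-degree literals is true, contributing its degree $d_i$ either way. Consequently the probability that the random matching leaves every clause satisfied depends on $\sigma$ only through this count, so $\prob\brc{\sigma \text{ SAT}}$ takes a common value $P$ for all $\sigma$, and $E(N) = 2^n P$.

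Next I would evaluate $P$ in the permutation model. Label the $k\alpha n/2$ true literal-edges and the $k\alpha n/2$ false literal-edges; the random permutation $\Pi$ matches them bijectively to the $k\alpha n$ clause-edges. A configuration is satisfying iff no clause has all $k$ of its edges matched to false literal-edges. I would count favourable permutations in two stages: (i) select which clause-edges receive a true literal-edge, requiring each clause to contain at least one selected edge and the total number selected to equal $k\alpha n/2$; call this count $C$; (ii) for each such selection, fill in the actual matching in $\brc{k\alpha n/2}!$ ways on the true side and $\brc{k\alpha n/2}!$ ways on the false side. Dividing by the $(k\alpha n)!$ total permutations gives $P = C\,\brc{\brc{k\alpha n/2}!}^2/(k\alpha n)!$, which already exhibits the prefactor in \eqref{eq:moment1}.

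The combinatorial heart is computing $C$. For a single clause, the number of ways to mark a nonempty subset of its $k$ edges as true, tracking the subset size by $x$, is $\sum_{j=1}^{k}\binom{k}{j}x^j = (1+x)^k - 1 = p(x)$. Since the clauses occupy disjoint blocks of edge-slots, the joint generating function over all $\alpha n$ clauses is $p(x)^{\alpha n}$, and $C$ is the coefficient of $x^{k\alpha n/2}$ in it; rewriting with $p(x)/x$, which is a genuine polynomial since $p(0)=0$, shifts the exponent to $k\alpha n/2 - \alpha n$ and yields \eqref{eq:moment1}. For the NAE count the only change is that each clause must contain at least one true and at least one false literal, so the per-clause generating function becomes $\sum_{j=1}^{k-1}\binom{k}{j}x^j = (1+x)^k - 1 - x^k = p_{\nae}(x)$, and the identical argument produces \eqref{eq:moment1nae}. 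I expect the symmetry step establishing the common value $P$ to be the conceptual crux, since every downstream step is a routine enumeration once the true-edge count is seen to be assignment-independent. $\endproof$
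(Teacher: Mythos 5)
Your proposal is correct and follows essentially the same route as the paper: reduce to a single assignment by symmetry, count favourable permutations by choosing the true clause-slots via the per-clause generating function $p(x)$ (resp.\ $p_{\nae}(x)$) and then permuting the true and false literal-edges, giving the $\brc{\brc{k\alpha n/2}!}^2/(k\alpha n)!$ prefactor. The only difference is that you justify the symmetry step explicitly via the assignment-independent count of true literal-edges, where the paper simply asserts it.
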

\begin{proof} 
Due to symmetry of the formula generation, any assignment of variables has the same 
probability of being a solution. This implies
\[
E(N(n, \alpha)) = 2^n \prob\brc{X=\{0,\dots,0\}\text{ is a solution}}. 
\]
The probability of the all-zero vector being a solution is given by 
\begin{multline*}
\prob\brc{X=\{0,\dots,0\}\text{ is a solution}} = \\ \frac{\text{Number of formulas for which } X=\{0,\dots,0\} 
\text{ is a solution}}{\text{Total number of formulas}}.
\end{multline*}
The total number of formulas is given by $(k \alpha n)!$. The total number of formulas for which the all-zero  
 assignment is a solution is given by
\[
  \brc{\brc{\frac{k \alpha n}{2}}!}^2 \mathrm{coef}\brc{p(x)^m, x^{\frac{k \alpha n}{2}}}. 
\] 
The factorial terms correspond to permuting the edges among true and false literals. Note that there are equal 
numbers of true and false literals. The generating function $p(x)$ corresponds to placing at least one positive literal 
in a clause. With these results and observing that 
\[
\mathrm{coef}\brc{p(x)^{\alpha n}, x^{\frac{k \alpha n}{2}}} = 
\mathrm{coef}\brc{\brc{\frac{p(x)}{x}}^{\alpha n}, x^{\frac{k \alpha n}{2}-\alpha n}}, 
\]
we obtain (\ref{eq:moment1}). The derivation for $E(N_{\text{\tiny NAE}}(n, \alpha))$ is identical except 
that the generating function for clauses is given by $p_{\text{\tiny NAE}}(x)$.
\end{proof}

We now state the Hayman method to approximate the coef-term which is
asymptotically correct \cite{Gar95}. 
\begin{lemma}[Hayman Method]\label{lem:hayman} 
Let $q(y) = \sum_i q_i y^i$ be a polynomial with non-negative coefficients such that 
$q_0 \neq 0$ and $q_1 \neq 0$. Define 
\begin{equation}\label{eq:defaqbq}
   a_q(y) = y \frac{d q(y)}{d y} \frac{1}{q(y)}, \quad b_q(y) = y \frac{d a_q(y)}{d y}. 
\end{equation}
Then, 
\begin{equation}\label{eq:hayman}
	\mathrm{coef}\brc{q(y)^n, y^{\omega n}} = \frac{q(y_\omega)^n}{(y_\omega)^{\omega n} 
\sqrt{2 \pi n b_q(y_\omega)}} (1+o(1)), 
\end{equation}
where $y_\omega$ is the unique positive solution of the saddle point equation $a_q(y) = \omega$. 
\end{lemma}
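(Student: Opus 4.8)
The plan is to prove the formula by the saddle-point method, extracting the coefficient through Cauchy's integral formula on a circle whose radius is exactly the saddle point $y_\omega$. Writing the coefficient as a contour integral gives
\begin{equation*}
\mathrm{coef}\brc{q(y)^n, y^{\omega n}} = \frac{1}{2\pi i}\oint_{\abs{y}=y_\omega} \frac{q(y)^n}{y^{\omega n + 1}}\,dy = \frac{1}{2\pi}\int_{-\pi}^{\pi} \frac{q(y_\omega e^{i\theta})^n}{y_\omega^{\omega n}\,e^{i\omega n\theta}}\,d\theta,
\end{equation*}
where the second equality uses the parametrization $y = y_\omega e^{i\theta}$. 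Before analyzing this integral I would first justify the choice of radius: since $q$ has non-negative coefficients and, by $q_0\neq 0$ and $q_1\neq 0$, is not a monomial, the function $a_q(y)$ is the mean of the probability distribution obtained by normalizing the weights $q_i y^i$, so it is strictly increasing on $(0,\infty)$ with $b_q(y) = y\,a_q'(y) > 0$ equal to the associated variance. Hence the saddle point equation $a_q(y)=\omega$ has the unique positive solution $y_\omega$ claimed in the statement, and $b_q(y_\omega)>0$.

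Next I would Taylor-expand the exponent about $\theta=0$. Setting $\gamma(\theta) = \ln q(y_\omega e^{i\theta}) - \ln q(y_\omega)$, the integrand equals $y_\omega^{-\omega n}q(y_\omega)^n \exp\!\brc{n\,(\gamma(\theta)-i\omega\theta)}$. Differentiating through the chain rule $\tfrac{d}{d\theta}y = iy$ gives $\gamma'(\theta) = i\,a_q(y_\omega e^{i\theta})$, so the first derivative of $\gamma(\theta)-i\omega\theta$ at $\theta=0$ is $i(a_q(y_\omega)-\omega)=0$, which is exactly the saddle point condition. A second differentiation yields $\gamma''(\theta) = -\,b_q(y_\omega e^{i\theta})$, whence $\gamma''(0) = -b_q(y_\omega)$. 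Therefore near the origin
\begin{equation*}
\gamma(\theta) - i\omega\theta = -\tfrac{1}{2}b_q(y_\omega)\,\theta^2 + O(\theta^3),
\end{equation*}
and the central part of the integral is Gaussian: substituting this expansion and extending the limits to $\pm\infty$ produces $\frac{1}{2\pi}\,y_\omega^{-\omega n}q(y_\omega)^n\sqrt{2\pi/(n\,b_q(y_\omega))}$, which is precisely the right-hand side of (\ref{eq:hayman}).

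To make this rigorous I would split the contour into a central arc $\abs{\theta}\le n^{-2/5}$ and the remainder. On the central arc the quadratic approximation holds and the cubic and higher terms contribute a factor $1+o(1)$, since $n\cdot(n^{-2/5})^3 \to 0$ while $n\cdot(n^{-2/5})^2\to\infty$, so replacing the truncated Gaussian by the full one is legitimate. The main obstacle is the tail bound on the outer region: I must show that $\abs{q(y_\omega e^{i\theta})}$ is strictly smaller than $q(y_\omega)$ there, uniformly, so that the outer integrand is exponentially smaller than the main term. This is where the hypotheses $q_0\neq 0$ and $q_1\neq 0$ are essential: by the triangle inequality $\abs{q(y_\omega e^{i\theta})} \le \sum_i q_i y_\omega^i = q(y_\omega)$, with equality only if every term $e^{ij\theta}$ shares a common phase, and the presence of the exponents $0$ and $1$ forces $e^{i\theta}=1$, i.e.~$\theta=0$, so the modulus attains its maximum uniquely at $\theta=0$ on $[-\pi,\pi]$. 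Combining the quadratic decay on the intermediate range $n^{-2/5}\le\abs{\theta}\le\delta$ with a compactness bound $\abs{q(y_\omega e^{i\theta})}\le q(y_\omega)e^{-c}$ for fixed $\delta\le\abs{\theta}\le\pi$ then shows the tail is negligible relative to the main term, completing the proof. This aperiodicity step, together with verifying Hayman's admissibility conditions uniformly, is the delicate part; the algebraic identities for $a_q$ and $b_q$ are routine once the contour and saddle are fixed.
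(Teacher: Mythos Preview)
Your argument is the standard saddle-point derivation and is correct: the Cauchy integral on the circle of radius $y_\omega$, the identification of $a_q$ and $b_q$ with the mean and variance of the tilted distribution $i\mapsto q_i y_\omega^i/q(y_\omega)$, the quadratic expansion at $\theta=0$, and the aperiodicity argument from $q_0,q_1\neq 0$ are all sound, and the central/tail split with window $n^{-2/5}$ is the usual way to make the Gaussian approximation rigorous.

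There is, however, nothing in the paper to compare against: the paper does not prove Lemma~\ref{lem:hayman} but simply quotes it as a known result, citing \cite{Gar95}. Your sketch is precisely the argument one finds in that reference (and in standard analytic-combinatorics treatments such as Flajolet--Sedgewick), so you have supplied what the paper deliberately omitted. The only point worth flagging is that existence of the saddle point $y_\omega$ implicitly requires $0<\omega<\deg q$, which is tacit in both the paper's statement and your write-up; in the paper's applications $\omega=k/2-1$ with $\deg q=k-1$, so this is automatic.
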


We now use Lemma \ref{lem:hayman} to compute the expectation of the total number of solutions. 
\begin{lemma}\label{lem:moment1hayman}
Let $N(n, \alpha)$ (resp. $N_{\text{\tiny NAE}}(n, \alpha)$) denote the total number of satisfying 
(resp. NAE satisfying) assignments of a regular random k-SAT 
formula. Let $q(x)=\frac{p(x)}{x}$, \\ $q_{\text{\tiny NAE}}(x) = \frac{p_{\text{\tiny NAE}}}{x}$, where $p(x)$ and 
$p_{\text{\tiny NAE}}(x)$ is defined in (\ref{eq:defpx}). Then, 
\begin{equation}\label{eq:moment1hayman}
E(N(n, \alpha)) = \sqrt{\frac{k}{4 b_q(x_k)}} 
e^{n \brc{\ln(2) - k \alpha \ln(2) + \alpha \ln\brc{q(x_k)}-\brc{\frac{k \alpha}{2}-\alpha} \ln\brc{x_k}}} (1+o(1)), 
\end{equation}
\begin{equation}\label{eq:moment1haymannae}
E(N_{\text{\tiny NAE}}(n, \alpha)) = \frac{\sqrt{k}
e^{n \brc{\ln(2) (1 - k \alpha) + \alpha \ln\brc{q_{\text{\tiny NAE}}(x_{k, \text{\tiny NAE}})}-\brc{\frac{k \alpha}{2}-\alpha} \ln\brc{x_{k, \text{\tiny NAE}}}}}}{\sqrt{4 b_{q_{\text{\tiny NAE}}}(x_{k, \text{\tiny NAE}})}} (1+o(1)), 
\end{equation}
where $x_k$ (resp. $x_{k, \text{\tiny NAE}}$) is the positive solution of $a_q(x) = \frac{k}{2}-1$ 
(resp. $a_{q_{\text{\tiny NAE}}}(x) = \frac{k}{2}-1$). The quantity $a_q(x)$, $a_{q_{\text{\tiny NAE}}}(x)$, 
 $b_q(x)$, and $b_{q_{\text{\tiny NAE}}}(x)$ are defined according to (\ref{eq:defaqbq}).
\end{lemma}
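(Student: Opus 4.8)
The plan is to feed the exact first-moment expression (\ref{eq:moment1}) into the Hayman estimate of Lemma \ref{lem:hayman} and then clean up the factorial prefactor with Stirling's formula. First I would record that $q(x)=p(x)/x=\frac{(1+x)^k-1}{x}$ is genuinely a polynomial, since $p(x)=(1+x)^k-1$ has vanishing constant term; its expansion is $q(x)=k+\binom{k}{2}x+\cdots$, so it has non-negative coefficients with $q_0=k\neq 0$ and $q_1=\binom{k}{2}\neq 0$. Hence the hypotheses of Lemma \ref{lem:hayman} are met. The same check applies to $q_\nae(x)=p_\nae(x)/x$: in $(1+x)^k-1-x^k$ both the constant term and the top $x^k$ term cancel, leaving $q_\nae(x)=k+\binom{k}{2}x+\cdots$, again with non-zero constant and linear coefficients.

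Next I would apply Lemma \ref{lem:hayman} to the coefficient appearing in (\ref{eq:moment1}). Writing the Hayman exponent as $N=\alpha n$, the power being extracted is $\frac{k\alpha n}{2}-\alpha n=\brc{\frac{k}{2}-1}\alpha n=\omega N$ with $\omega=\frac{k}{2}-1$, which is exactly the value defining the saddle point $x_k$ through $a_q(x_k)=\frac{k}{2}-1$. For $k\geq 3$ one has $0<\frac{k}{2}-1<k-1=\deg q$, so Lemma \ref{lem:hayman} supplies a unique positive root $x_k$ and gives
\[
\mathrm{coef}\brc{q(x)^{\alpha n},\, x^{\brc{\frac{k}{2}-1}\alpha n}}=\frac{q(x_k)^{\alpha n}}{x_k^{\brc{\frac{k}{2}-1}\alpha n}\sqrt{2\pi \alpha n\, b_q(x_k)}}\,(1+o(1)).
\]

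I would then handle the prefactor $2^n\brc{\brc{\frac{k\alpha n}{2}}!}^2\big/(k\alpha n)!$ via Stirling. Setting $M=\frac{k\alpha n}{2}$, Stirling gives $\frac{(M!)^2}{(2M)!}=\sqrt{\pi M}\,2^{-2M}(1+o(1))$, so the prefactor equals $2^n\sqrt{\tfrac{\pi k\alpha n}{2}}\,e^{-k\alpha n\ln 2}(1+o(1))$. Multiplying this by the Hayman estimate, the $n$-exponential factors collect into $\exp\{n(\ln 2-k\alpha\ln 2+\alpha\ln q(x_k)-\brc{\frac{k\alpha}{2}-\alpha}\ln x_k)\}$, which is precisely the exponent in (\ref{eq:moment1hayman}), while the two square-root factors combine as $\sqrt{\tfrac{\pi k\alpha n}{2}}\big/\sqrt{2\pi\alpha n\, b_q(x_k)}=\sqrt{k/(4 b_q(x_k))}$, the claimed constant. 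This yields (\ref{eq:moment1hayman}). The NAE identity (\ref{eq:moment1haymannae}) follows from the identical computation with $q$ replaced by $q_\nae$ and $x_k$ by $x_{k,\nae}$, the only change being the polynomial fed into Lemma \ref{lem:hayman}.

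I expect no deep obstacle; the argument is essentially a substitution. The one point needing care is the bookkeeping of the polynomial prefactors: one must track the $\sqrt{n}$ and $\pi$ contributions from Stirling and from the Hayman denominator so that they cancel cleanly into the $n$-independent constant $\sqrt{k/(4 b_q(x_k))}$, and one must confirm that $\omega=\frac{k}{2}-1$ lies strictly inside the admissible range $(0,\deg q)$ for every $k\geq 3$ (and likewise for $q_\nae$) so that the saddle point is a well-defined positive real.
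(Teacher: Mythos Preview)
Your proposal is correct and matches the paper's approach: the paper states Lemma~\ref{lem:moment1hayman} without proof, as it is the direct consequence of substituting the Hayman estimate of Lemma~\ref{lem:hayman} into the exact expressions (\ref{eq:moment1})--(\ref{eq:moment1nae}) and applying Stirling to the central-binomial-type prefactor. Your verification of the hypotheses $q_0,q_1\neq 0$ and of the admissibility of $\omega=\frac{k}{2}-1$, together with the explicit bookkeeping showing the $\sqrt{n}$ factors cancel into $\sqrt{k/(4b_q(x_k))}$, fills in exactly the details the paper leaves implicit.
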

In the following lemma we derive explicit upper bounds on the satisfiability and NAE satisfiability thresholds for 
$k \geq 3$.
\begin{lemma}[Upper bound]
Let $\alpha^*$ (resp. $\alpha_{\text{\tiny NAE}}^*$) be the satisfiability
(resp. NAE satisfiability) threshold for the regular random $k$-SAT formulas. Define
$\alpha_u^*$ (resp. $\alpha_{u, \text{\tiny NAE}}^*$) to be the upper bound on
$\alpha^*$ (resp. $\alpha_{\text{\tiny NAE}}^*$) obtained by the first moment
method. Then,
\begin{equation}\label{eq:ubalpha}
\alpha^* \leq  \alpha_u^* \leq 2^k \ln(2) (1+o_k(1)),  \quad 
\alpha_{\text{\tiny NAE}}^* \leq \alpha_{u, \text{\tiny NAE}}^* = 
2^{k-1} \ln(2)-\frac{\ln(2)}{2}-o_k(1).
\end{equation}
\end{lemma}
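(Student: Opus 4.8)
The plan is to convert the asymptotic formula of Lemma~\ref{lem:moment1hayman} into a threshold bound through the first moment inequality $\prob\brc{N(n,\alpha)>0}\le E(N(n,\alpha))$: whenever the exponent in \eqref{eq:moment1hayman} is strictly negative, $E(N(n,\alpha))\to 0$, so no satisfying assignment exists w.h.p.\ and $\alpha$ lies above the threshold. The key structural observation is that the saddle point $x_k$ is fixed by $a_q(x)=\frac{k}{2}-1$, which does not involve $\alpha$; consequently the exponent is \emph{affine} in $\alpha$, of the form $\ln(2)+\alpha\,C_k$ with $C_k=-k\ln(2)+\ln\brc{q(x_k)}-\brc{\frac{k}{2}-1}\ln(x_k)$. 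The prefactor $\sqrt{k/(4b_q(x_k))}$ is independent of $n$ and hence irrelevant to the sign of the limit. Thus the first moment bound is exactly the root of the affine exponent, $\alpha_u^*=-\ln(2)/C_k$ (after checking $C_k<0$), and identically $\alpha_{u,\nae}^*=-\ln(2)/C_{k,\nae}$ with the analogous $C_{k,\nae}$ built from $q_\nae$ and $x_{k,\nae}$.

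First I would simplify the saddle point equations. For SAT, writing $a_q(x)=\frac{kx(1+x)^{k-1}}{(1+x)^k-1}-1$ and setting it equal to $\frac{k}{2}-1$ collapses, after clearing denominators, to the clean relation $(1+x_k)^{k-1}(1-x_k)=1$. The same manipulation for NAE, with $q_\nae(x)=\brc{(1+x)^k-1-x^k}/x$, yields $(1+x)^{k-1}(1-x)=1-x^k$, and I would observe that $x_{k,\nae}=1$ solves this \emph{exactly} for every $k$, since both sides vanish and indeed $a_{q_\nae}(1)=\frac{k(2^{k-1}-1)}{2(2^{k-1}-1)}-1=\frac{k}{2}-1$. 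This exact value is precisely what makes the NAE bound an equality rather than merely an asymptotic estimate.

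For the NAE case I would then substitute $x_{k,\nae}=1$. Because $\ln(1)=0$ the last term of the exponent drops, and with $q_\nae(1)=2^k-2$ the exponent becomes $\ln(2)+\alpha\ln\brc{1-2^{1-k}}$; setting it to zero gives the exact identity $\alpha_{u,\nae}^*=\ln(2)/\brc{-\ln\brc{1-2^{1-k}}}$. Expanding $-\ln\brc{1-2^{1-k}}=2^{1-k}+2^{1-2k}+O(2^{-3k})$ and inverting yields $\alpha_{u,\nae}^*=2^{k-1}\ln(2)-\frac{\ln(2)}{2}-o_k(1)$, the claimed formula.

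The SAT case is where the genuine work lies, and I expect it to be the main obstacle. Here there is no closed form for $x_k$, so I would set $x_k=1-\delta_k$ and extract asymptotics from $(2-\delta_k)^{k-1}\delta_k=1$, giving $\ln\delta_k=-(k-1)\ln(2)+\frac{(k-1)\delta_k}{2}+O(k\delta_k^2)$ and hence $\delta_k=2^{1-k}(1+o(1))$. The delicate point is that the two $\Theta(k)$ contributions to $C_k$ cancel: using the identity $q(x_k)=2/\delta_k$ (itself a consequence of the saddle equation) together with $\ln(x_k)=-\delta_k+O(\delta_k^2)$, the leading $\ln(2)$ terms in $C_k$ cancel exactly, and the surviving term is $-C_k=\frac{\delta_k}{2}\brc{1+o(1)}=2^{-k}\brc{1+o(1)}$. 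Therefore $\alpha_u^*=\ln(2)/(-C_k)=2^k\ln(2)\brc{1+o_k(1)}$, completing the bound $\alpha^*\le\alpha_u^*\le 2^k\ln(2)\brc{1+o_k(1)}$. The care needed in tracking the subleading $\delta_k$-order terms through this cancellation, rather than any single estimate, is the crux of the argument.
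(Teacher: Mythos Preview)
Your proof is correct. For NAE-satisfiability it is essentially identical to the paper's: both observe that $x_{k,\nae}=1$ solves the saddle equation exactly, substitute, and expand $-\ln(1-2^{1-k})$.

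For satisfiability the two arguments diverge. The paper exploits the variational characterization $x_k=\text{argmin}_{x>0}\, q(x)/x^{k/2-1}$, which turns the exponent in \eqref{eq:moment1hayman} into an inequality valid for \emph{any} $x>0$; it then plugs in the explicit test point $x=1-2^{-k}$ and bounds the resulting expression by elementary inequalities on $\ln(1-x)$, arriving at the closed-form \eqref{eq:temp4}. You instead work with the true saddle point: you derive the clean equation $(1+x_k)^{k-1}(1-x_k)=1$, extract from it the identity $q(x_k)=2/\delta_k$ with $\delta_k=1-x_k\sim 2^{1-k}$, and carry the cancellation of the $\Theta(k)$ terms through asymptotically. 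Your route is more direct and in fact yields slightly more, namely the asymptotic value of $\alpha_u^*$ itself rather than merely an upper bound on it, at the price of having to track the $O(k\delta_k^2)$ error through that cancellation. The paper's variational trick sidesteps this bookkeeping entirely by never needing to locate $x_k$.
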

\begin{proof}
 We observe that the solution $x_k$ of the saddle point equation $a_q(x) = \frac{k}{2}-1$  
satisfies: $x_k = {\text{argmin}}_{x > 0} \frac{q(x)}{x^{\frac{k}{2}-1}}$, where $a_q(x)$ is defined 
according to (\ref{eq:defaqbq}).  
This implies that we obtain the following upper bound on the growth rate of $E(N(n, \alpha)))$ for any 
$x > 0$, 
\begin{equation}\label{eq:ubENx} 
\lim_{n \to \infty} \frac{\ln\brc{E(N(n, \alpha))}}{n} \leq \ln(2) - k \alpha \ln(2) + \alpha \ln(q(x)) - 
\brc{\frac{k \alpha}{2}-\alpha} \ln(x).
\end{equation}
We substitute $x=1-\frac{1}{2^k}$ in (\ref{eq:ubENx}). Then we use the series 
expansion of $\ln(1-x)$, $1/i \geq 1/2^i$, and $-1/i \geq -1$
 to obtain the following upper bound on the threshold, 
\begin{equation}\label{eq:temp4}
\alpha^* \leq \frac{2^k \ln(2)}{\frac{1}{\brc{1-\frac{1}{2^{k+1}}}^k} + \frac{k}{2^{k+4}} + 
\frac{1}{2^{k+2} \brc{1-\frac{1}{2^{k+1}}}^{2 k}} - \frac{1}{2^{k+1}}. 
}
\end{equation}
The summation of the last three terms in the denominator of (\ref{eq:temp4}) is positive. This 
can be easily seen for $k \geq 8$. For $3 \leq k < 8$, it can be verified by explicit calculation. 
Dropping this summation in (\ref{eq:temp4}), we obtain the desired upper bound on the threshold. 
To derive the bound for NAE satisfiability, we note that $x_{k, \text{\tiny NAE}}=1$ for $k \geq 3$. 
By substituting this in the exponent of $E(N_{\text{\tiny NAE}})$ and equating it to zero, we obtain the 
desired expression for $\alpha_{u, \text{\tiny NAE}}^*$. 
\end{proof}

In the next section we use the second moment method to obtain lower bounds on the satisfiability and 
NAE satisfiability thresholds of regular random $k$-SAT. 
\section{Second Moment}
A lower bound on the threshold can be obtained by the second moment method.  
 The second moment method is governed by the following equation
\begin{equation}\label{eq:smmethod}
  \prob\brc{X > 0} \geq \frac{E(X)^2}{E(X^2)}. 
\end{equation}
In this section we compute the second moment of $N(n, \alpha)$ and
$N_{\text{\tiny NAE}}(n, \alpha)$.  Our computation of the second moment is
inspired by the computation of the second moment for the weight and stopping set
distributions of regular LDPC codes in \cite{Vra06, Vra08} (see also \cite{BaB05}). 
We compute the second moment in the next lemma. 
\begin{lemma} 
Let $N(n, \alpha)$ be the number of satisfying solutions to a regular random  
k-SAT formula. Define the function $f(x_1, x_2, x_3)$ by 
\begin{equation}\label{eq:deff}
f(x_1, x_2, x_3) = (1+x_1+x_2+x_3)^k - (1+x_1)^k - (1+x_3)^k + 1.
\end{equation}
If the regular random formulas are strictly regular, then
\begin{multline}\label{eq:moment2deg1}
E\brc{N(n, \alpha)^2} = \\ \sum_{i=0}^n 2^n \binom{n}{i} \frac{\brc{\brc{\dr(n -i )}!}^2 \brc{(\dr i)!}^2 
\mathrm{coef}\brc{f(x_1, x_2, x_3)^{\alpha n}, x_1^{\dr (n-i)} x_2^{\dr i} x_3^{\dr (n-i)}}}{(k \alpha n)!}.
\end{multline}
If the regular random formulas are $2$-regular, then  
\begin{multline}\label{eq:moment2}
E\brc{N(n, \alpha)^2} = \sum_{i_\dr=0}^{n_{\dr}} 
\sum_{i_{\dr+1} = 0}^{n_{\dr+1}} 2^n \binom{n_\dr}{i_\dr} 
\binom{n_{\dr+1}}{i_{\dr+1}} \brc{\brc{\frac{k \alpha n}{2} - \dr i_\dr - (\dr+1) i_{\dr+1}}!}^2 \\
\frac{\brc{\brc{\dr i_\dr + (\dr+1) i_{\dr+1}}!}^2}{(k \alpha n)!} 
\mathrm{coef}\brc{f(x_1, x_2, x_3)^{\alpha n}, 
(x_1 x_3)^{\frac{k \alpha n}{2}-\dr i_\dr - (\dr+1) i_{\dr+1}} x_2^{\dr i_\dr + (\dr+1) i_{\dr+1}}}.
\end{multline}
For both the strictly regular and the 2-regular case, the expression for
$E\brc{N_{\text{\tiny NAE}}(n, \alpha)^2}$ is the same as that for $E\brc{N(n,
\alpha)^2}$ except replacing the generating function $f(x_1, x_2, x_3)$  by
$f_{\text{\tiny NAE}}(x_1, x_2, x_3)$, which is given by
\begin{multline}\label{eq:fnae}
f_{\nae}(x_1, x_2, x_3) = (1+x_1+x_2+x_3)^k - \\ \brc{(1+x_1)^k + (1+x_3)^k - 1 + (x_1 + x_2)^k - x_1^k + (x_2 + x_3)^k - x_2^k - x_3^k}.
\end{multline}
\end{lemma}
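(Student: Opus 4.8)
The plan is to mirror the first-moment computation, replacing the single planted assignment by an ordered pair. By the symmetry of the configuration model under negating variables (a variable and its negation have equal degree), every ordered pair of truth assignments with the same \emph{overlap} contributes equally to $E\brc{N(n,\alpha)^2} = \sum_{\sigma,\tau}\prob\brc{\sigma\text{ and }\tau\text{ are both solutions}}$, so I may fix the first assignment to be the all-zero vector $\mathbf{0}$ and write
\[
E\brc{N(n,\alpha)^2} = 2^n \sum_{\tau} \prob\brc{\mathbf{0}\text{ and }\tau\text{ are both solutions}},
\]
the factor $2^n$ accounting for the free choice of the first assignment. First I would group the inner sum by the number $i$ of variables on which $\tau$ agrees with $\mathbf{0}$, i.e.\ those set false by $\tau$; there are $\binom{n}{i}$ such $\tau$, producing the binomial factor in (\ref{eq:moment2deg1}).

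The next step is to sort the literal edges into four classes according to the pair (value under $\mathbf{0}$, value under $\tau$): both-true (TT), true-only-under-$\mathbf{0}$ (TF), true-only-under-$\tau$ (FT), and both-false (FF). For each of the $i$ agreeing variables the negative literal is TT and the positive literal is FF, while for each of the $n-i$ disagreeing variables the negative literal is TF and the positive literal is FT; in the strictly regular case every literal carries $\dr$ edges, giving $\dr i$ edges in each of TT and FF and $\dr(n-i)$ edges in each of TF and FT. A formula makes both assignments satisfy a given clause exactly when that clause receives at least one TT or TF edge (true under $\mathbf{0}$) \emph{and} at least one TT or FT edge (true under $\tau$). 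Marking TF, TT, FT slots by $x_1,x_2,x_3$ respectively and leaving FF slots unmarked, inclusion--exclusion over the two failure events ``no literal true under $\mathbf{0}$'' and ``no literal true under $\tau$'' yields the per-clause weight
\[
f(x_1,x_2,x_3) = (1+x_1+x_2+x_3)^k - (1+x_3)^k - (1+x_1)^k + 1,
\]
which is (\ref{eq:deff}). The coefficient of $x_1^{\dr(n-i)}x_2^{\dr i}x_3^{\dr(n-i)}$ in $f^{\alpha n}$ counts the clause configurations that use the correct number of edges from each class, and once such a configuration is fixed the number of consistent permutations $\Pi$ is the product of the within-class matching factorials $\brc{\brc{\dr(n-i)}!}^2\brc{(\dr i)!}^2$. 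Dividing by the total $(k\alpha n)!$ and summing over $i$ gives (\ref{eq:moment2deg1}).

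For the $2$-regular case I would repeat the identical bookkeeping but split the agreeing variables by degree, writing $i_\dr$ and $i_{\dr+1}$ for the numbers of agreeing variables of degrees $\dr$ and $\dr+1$. The number of TT (equivalently FF) edges is then $\dr i_\dr + (\dr+1)i_{\dr+1}$, and since the true-under-$\mathbf{0}$ and true-under-$\tau$ edges each total $\frac{k\alpha n}{2}$, the number of TF (equivalently FT) edges is $\frac{k\alpha n}{2}$ minus this quantity; substituting these counts into the factorials and the coefficient extraction reproduces (\ref{eq:moment2}). The NAE statements follow by demanding the not-all-equal condition for \emph{each} assignment, so the per-clause weight is obtained by inclusion--exclusion over the four failure events ``$\mathbf{0}$ sees no true literal'', ``$\mathbf{0}$ sees no false literal'', ``$\tau$ sees no true literal'', and ``$\tau$ sees no false literal''. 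The hard part is this last inclusion--exclusion: for each subset of the four events one must identify which literal classes remain admissible, observe that the two contradictory pairs (no-true and no-false under the same assignment) and every triple intersection leave no admissible class and so contribute $0^k=0$, and collect the four surviving pairwise terms $x_1^k,x_2^k,x_3^k,1$ together with the four single-event terms. Assembling these reproduces exactly $f_{\nae}(x_1,x_2,x_3)$ as in (\ref{eq:fnae}), while the binomial, factorial, and coefficient structure is inherited unchanged from the SAT computation.
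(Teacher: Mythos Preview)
Your proof is correct and follows essentially the same route as the paper: fix the first assignment to $\mathbf{0}$ by symmetry, stratify by overlap, classify literal edges into four types by their truth value pair, read off the per-clause generating function, and combine with the within-class permutation factorials over $(k\alpha n)!$. The only cosmetic difference is that you derive $f_{\nae}$ via a systematic inclusion--exclusion over the four failure events, whereas the paper obtains it by directly enumerating the forbidden edge-type combinations for a clause; these are the same computation, and your version has the advantage of making the signs and the vanishing of the triple and quadruple intersections explicit.
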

\begin{proof}
Let $\openone_{X Y}$ be the indicator variable which evaluates to $1$ if 
the truth assignments $X$ and $Y$ satisfy a randomly regular k-SAT formula. Then,  
\[
E(N(n, \alpha)^2) = \sum_{X, Y \in \{0, 1\}^n} E\brc{\openone_{\bf{X} \bf {Y}}}
= 2^n \sum_{Y \in \{0, 1\}^n} \prob\brc{{\bf 0} \text{ and } Y \text{ are
solutions}}. 
\]
The last simplification uses the fact that the number of formulas which are satisfied by both $X$ and $Y$ 
depends only on the number of variables on which $X$ and $Y$ agree. Thus, we 
fix $X$ to be the all-zero vector.  

We now consider the strictly regular case. The probability that the all-zero truth assignment
and the truth assignment $Y$ both are solutions of a randomly chosen regular
formula depends only on the {\it overlap}, i.e., the number of variables where
 the two truth assignments agree. Thus for a given overlap $i$, we can fix $Y$
to be equal to zero in the first $i$ variables and equal to $1$ in the 
 remaining variables.  This gives,
\begin{equation}\label{eq:deg1m2temp1}
E(N(n, \alpha)^2) = \sum_{i=0}^{n} 2^n \binom{n}{i} \prob\brc{{\bf 0} \text{ and } Y \text{ are solutions}}.
\end{equation}
In order to evaluate the probability that both ${\bf 0}$ and $Y$ are solutions 
for a given overlap $i$, we observe that there are four different types of edges 
connecting the literals and the clauses.  There are $\dr (n-i)$ {\bf type 1}
edges which are connected to true literals w.r.t. the {\bf 0} truth assignment and
false w.r.t. to the $Y$ truth assignment. The $\dr i$ {\bf type 2} edges are connected to
true literals w.r.t. both the  truth assignments.  There are $\dr (n-i)$ {\bf
type 3} edges which are connected to false literals w.r.t. the {\bf 0} truth assignment and true
literals w.r.t. to the $Y$ truth assignment.  The $\dr i$ {\bf type 4} edges are connected to
false literals w.r.t. both the  truth assignments. Let $f(x_1, x_2, x_3)$ be
the generating function counting the number of possible edge connections to a
clause, where the power of $x_i$ gives the number of edges of type $i$, $i \in \{1,
2, 3\}$. A clause is satisfied if it is connected to at least one type $2$
edge. Otherwise, it is satisfied if it is connected to at least one type $1$
and at least one type $3$ edge. Then the generating function $f(x_1, x_2, x_3)$
is given as in (\ref{eq:deff}).  Using this, we obtain
\begin{multline}\label{eq:deg1probtemp}
\prob\brc{{\bf 0} \text{ and } Y \text{ are solutions}} = \\ \frac{\brc{(\dr (n-i))!}^2 \brc{(\dr i)!}^2 
\mathrm{coef}\brc{f(x_1, x_2, x_3)^{\alpha n}, x_1^{\dr (n-i)} x_2^{\dr i} x_3^{\dr (n-i)}}}{(k \alpha n)!},
\end{multline}
where $(k \alpha n)!$ is the total number of formulas. Consider a given formula which is satisfied by both  truth 
assignments ${\bf 0}$ and $Y$. If we permute the positions of type $1$ edges on the clause side, we obtain another 
formula having ${\bf 0}$ and $Y$ as solutions. The argument holds true for the type $i$ edges, $i \in \{2, 3, 4\}$.
This explains the term $(\dr (n-i))!$ in (\ref{eq:deg1probtemp}) which corresponds to permuting the type $1$ edges (it is squared because of the 
same contribution from type $3$ edges). Similarly, $(\dr i)!^2$ corresponds to permuting type $2$ and type $4$ edges. 
Combining (\ref{eq:deg1m2temp1}) and (\ref{eq:deg1probtemp}), we obtain the desired expression for the second moment of the 
number of solutions as given in (\ref{eq:moment2deg1}).

We now consider the two regular case. Note that in this case the equivalent equation 
corresponding to (\ref{eq:deg1m2temp1}) is  
\begin{equation}\label{eq:m2temp1}
E(N(n, \alpha)^2) = \sum_{i_\dr=0}^{n_\dr} \sum_{i_{\dr+1}=0}^{n_{\dr+1}} 2^n \binom{n_\dr}{i_\dr} 
\binom{n_{\dr+1}}{i_{\dr+1}} \prob\brc{({\bf 0}, Y) \text{ is a solution}},
\end{equation}
where $i_\dr$ (resp. $i_{\dr+1}$) is the variable corresponding to the overlap between truth assignments  
${\bf 0}$ and $Y$ among variables with degree $\dr$ (resp. $\dr+1$). Similarly, the equivalent of 
(\ref{eq:deg1probtemp}) is given by 
\begin{multline}\label{eq:probtemp}
\prob\brc{{\bf 0} \text{ and } Y \text{ are solutions}} = 
\brc{\brc{\dr (n_\dr-i_\dr) + (\dr+1) (n_{\dr+1}-i_{\dr+1})}!}^2 \\ 
\times \frac{\brc{\brc{\dr i_\dr + (\dr+1) i_{\dr+1}}!}^2}{(k \alpha n)!} \\ 
\times \mathrm{coef}\brc{f(x_1, x_2, x_3)^{\alpha n}, 
(x_1 x_3)^{\dr (n_\dr-i_\dr) + (\dr+1) (n_{\dr+1}-i_{\dr+1})} x_2^{\dr i_\dr + (\dr+1) i_{\dr+1}}}.
\end{multline}
Combining (\ref{eq:m2temp1}) and (\ref{eq:probtemp}),  and observing that 
$
	\dr n_{\dr} + (\dr+1) n_{\dr+1} = \frac{k \alpha n}{2}, 
$
we obtain (\ref{eq:moment2}). The derivation of $E\brc{N_\nae(n, \alpha)^2}$ is identical except  
the generating function for NAE-satisfiability of a clause is different. This can be easily derived by 
observing that a clause is not NAE-satisfied for the following edge connections. Consider the case 
when a clause is connected to only one type of edge, then it is not NAE-satisfied. Next consider the case 
when a clause is connected to two types of edges. Then the combinations of  
type 1 and type 4, type 3 and type 4, type 1 and type 2, or type 2 and type 3 do not NAE-satisfies a clause. This 
gives the generating function $f_\nae(x_1, x_2, x_3)$ defined in (\ref{eq:fnae}). 
\end{proof}

In order to evaluate the second moment, we now present the multidimensional saddle point method 
in the next lemma \cite{BeR83}. A detailed technical exposition of the multidimensional saddle point method 
can be found in Appendix D of \cite{RiU08}.  
\begin{theorem} \label{thm:multisaddle}
Let $\un{i}:=(i_1, i_2 , i_3)$, $\un{j}:=(j_1, j_2, j_3)$,  and $\un{x} = (x_1, x_2, x_3)$
\[
0 < \lim_{n \to \infty} i_1/n, \quad 0 < \lim_{n \to \infty} i_{2}/n, \quad 0 < \lim_{n \to \infty} i_3/n.
\]
Let further $f(\un{x})$ be as defined in (\ref{eq:deff}) and $\un{t}=(t_1,t_2,t_3)$ be a positive solution of the saddle point equations  
$
a_f(\un{x}) \triangleq \left\{ x_i \frac{\partial \ln(f(x_1, x_2, x_3))}{\partial x_i} \right\}_{i=1}^3 = \frac{\un{i}}{\alpha n}. 
$ 
Then $\mathrm{coef}\brc{f(\un{x})^{\alpha n},\un{x}^{\un{i}}}$ can be approximated as ,
\[
\mathrm{coef}\brc{f(\un{x})^{\alpha n},\un{x}^{\un{i}}} = \frac{f(\un{t})^{\alpha n}}{({\un{t}})^{\un{i}}\sqrt{(2 \pi \alpha n)^3 \abs{B(\un{t})}}}(1+o(1)),
\]
using the saddle point method for multivariate polynomials, 
where $B(\un{x})$ is a $3 \times 3$ matrix whose elements are given by 
$B_{i,j}=x_j \D{a_{fi}(x_1, x_2, x_3)}{x_j}=B_{j,i}$ and $a_{f i}(\un{x})$ is the $i^\text{\tiny th}$ coordinate of $a_f(\un{x})$. Also, $\mathrm{coef}\brc{f(\un{x})^{\alpha n}, \un{x}^{\un{j}}}$ can be approximated in terms of $\mathrm{coef}\brc{f(\un{x})^{\alpha n},\un{x}^{\un{i}}}$. This approximation is called the   
{\em local limit theorem} of $\un{j}$ around $\un{i}$. Explicitly, if $\un{u} := \frac{1}{\sqrt{\alpha n}} (\un{j}-\un{i})$ and $\norm{\un{u}}=O((\ln{n})^{\frac{1}{3}})$, then  
\begin{eqnarray*}
\mathrm{coef}\brc{f(\un{x})^{\alpha n},\un{x}^{\un{j}}} & = & \un{t}^{\un{i}-\un{j}} \exp\brc{-\frac{1}{2} \un{u}\cdot B(\un{t})^{-1}\cdot\un{u}^T} 
 \mathrm{coef}\brc{f(\un{x})^{\alpha n},\un{x}^{\un{i}}} (1+o(1)). 
\end{eqnarray*}
\end{theorem}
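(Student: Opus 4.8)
The plan is to establish the coefficient estimate by a direct multivariate Laplace (saddle-point) analysis of a Cauchy integral, and then to obtain the local limit theorem as a perturbation of the very same computation, so that only one genuinely analytic estimate has to be carried out. First I would represent the coefficient as the triple contour integral
\[
\mathrm{coef}\brc{f(\un{x})^{\alpha n}, \un{x}^{\un{i}}} = \frac{1}{(2 \pi \mathbbm{i})^3} \oint \oint \oint \frac{f(\un{x})^{\alpha n}}{x_1^{i_1+1} x_2^{i_2+1} x_3^{i_3+1}}\, dx_1\, dx_2\, dx_3,
\]
and take each contour to be the circle $x_j = t_j e^{\mathbbm{i} \theta_j}$ of radius $t_j$, where $\un{t}$ is the given positive saddle point. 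Writing $g(\un{\theta}) = \alpha n \ln f\brc{t_1 e^{\mathbbm{i}\theta_1}, t_2 e^{\mathbbm{i}\theta_2}, t_3 e^{\mathbbm{i}\theta_3}} - \mathbbm{i}\, \un{i} \cdot \un{\theta}$, the integral becomes $\un{t}^{-\un{i}} (2\pi)^{-3} \int_{[-\pi,\pi]^3} e^{g(\un{\theta})}\, d\un{\theta}$.

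The point of choosing the radii to be the saddle point is that $g$ is stationary at $\un{\theta} = \un{0}$. Indeed $\D{}{\theta_j} \ln f = \mathbbm{i}\, x_j \D{\ln f}{x_j} = \mathbbm{i}\, a_{fj}(\un{x})$, so the saddle point equation $a_f(\un{t}) = \un{i}/(\alpha n)$ says precisely that the linear term of $g$ vanishes at the origin. I would then Taylor expand $g$ to second order: the constant term gives the factor $f(\un{t})^{\alpha n}$; the linear term cancels; and a short computation shows the Hessian of $\ln f$ in the $\un{\theta}$ variables at the origin equals $-B(\un{t})$, so the quadratic term is $-\frac{\alpha n}{2} \un{\theta} \cdot B(\un{t}) \cdot \un{\theta}^T$. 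Carrying out the resulting Gaussian integral over $\un{\theta} \in \mathbbm{R}^3$ produces $(2\pi)^{3/2} / \sqrt{(\alpha n)^3 \abs{B(\un{t})}}$, and collecting the prefactors $\un{t}^{-\un{i}} (2\pi)^{-3} f(\un{t})^{\alpha n}$ reproduces the claimed formula $f(\un{t})^{\alpha n} / \brc{\un{t}^{\un{i}} \sqrt{(2 \pi \alpha n)^3 \abs{B(\un{t})}}}$ exactly.

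The step I expect to be the main obstacle is justifying the localization, i.e.\ showing that the part of the integral with $\un{\theta}$ bounded away from $\un{0}$ is negligible compared to $f(\un{t})^{\alpha n}$. Because $f$ has non-negative coefficients, $\abs{f\brc{t_1 e^{\mathbbm{i}\theta_1}, t_2 e^{\mathbbm{i}\theta_2}, t_3 e^{\mathbbm{i}\theta_3}}} \le f(\un{t})$, and equality would force all monomials of $f$ to share a common phase. This is ruled out by aperiodicity: the monomials $x_2, x_1 x_2, x_2 x_3, x_2^2$ all occur in $f$ for $k \ge 2$, and their exponent differences $(1,0,0),(0,0,1),(0,1,0)$ generate $\mathbbm{Z}^3$, so the modulus is maximized uniquely at $\un{\theta} = \un{0}$. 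I would make this quantitative in the usual two-scale manner --- a bound $\abs{f(\cdot)}/f(\un{t}) \le 1 - c\norm{\un{\theta}}^2$ on a shrinking neighborhood and a uniform bound strictly below $1$ outside a fixed ball --- which are exactly the multivariate admissibility estimates of \cite{BeR83}. The hypotheses $0 < \lim_{n\to\infty} i_j/n$ keep $\un{t}$ in a compact subset of the positive orthant, so these constants can be chosen uniformly in $n$.

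Finally, for the local limit theorem I would keep the \emph{same} saddle point $\un{t}$ (anchored at $\un{i}$) and redo the computation with $\un{i}$ replaced by $\un{j}$ only in the phase $e^{-\mathbbm{i}\, \un{j} \cdot \un{\theta}}$. Writing $\un{j} = \un{i} + \sqrt{\alpha n}\, \un{u}$, the exponent acquires the extra linear term $-\mathbbm{i} \sqrt{\alpha n}\, \un{u} \cdot \un{\theta}$; completing the square against $-\frac{\alpha n}{2} \un{\theta} \cdot B(\un{t}) \cdot \un{\theta}^T$ yields the Gaussian factor $\exp\brc{-\frac{1}{2} \un{u} \cdot B(\un{t})^{-1} \cdot \un{u}^T}$, while the change of prefactor from $\un{t}^{-\un{i}}$ to $\un{t}^{-\un{j}}$ supplies $\un{t}^{\un{i}-\un{j}}$, giving the stated relation. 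The constraint $\norm{\un{u}} = O\brc{(\ln n)^{1/3}}$ is exactly what keeps the shifted phase inside the region where the quadratic approximation is valid and the tail bound of the previous paragraph still dominates, so the same error control carries over verbatim.
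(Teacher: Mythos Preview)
Your proof is correct and follows the standard multivariate saddle-point route. Note, however, that the paper does not actually prove Theorem~\ref{thm:multisaddle}: it is stated as a known result, with references to \cite{BeR83} for the original argument and to Appendix~D of \cite{RiU08} for a detailed exposition, and is then used as a black box in Theorems~\ref{thm:reglowerbound} and~\ref{thm:2reglowerbound}. What you have written is essentially a compressed version of the argument one finds in those references --- Cauchy integral on circles of saddle radius, vanishing of the linear term by the saddle equations, identification of the $\un{\theta}$-Hessian with $-B(\un{t})$, aperiodicity of $f$ to localize, and completion of the square for the local limit statement --- so there is no methodological divergence to discuss. One small remark: $f(0,0,0)=0$, so $f$ has no constant term; your aperiodicity check via the monomials $x_2, x_1x_2, x_2x_3, x_2^2$ is unaffected by this, but it is worth saying explicitly that the admissibility hypotheses of \cite{BeR83} do not require a nonzero constant term, only that the support of $f$ generate the full lattice and that a positive saddle exist, both of which you verify.
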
 

Because of the relative simplicity of the expression for the second moment, we
explain its computation in detail for the strictly regular case. Then we will
show how the arguments can be easily extended to the 2-regular case. The derivation for 
the NAE-satisfiability is identical for both cases.  
\begin{theorem}\label{thm:reglowerbound}
Consider the strictly regular random $k$-SAT model with literal degree $\dr$. Let $S(i)$ denote the 
$i^{\text{th}}$ summation term in (\ref{eq:moment2deg1}), and $\gamma=i/n$. If $S(n/2)$ is the dominant term i.e.,   
\begin{equation}\label{eq:snbytwodominant}
 \lim_{n \to \infty} \frac{\ln\brc{S\brc{\frac{n}{2}}}}{n} > \lim_{n \to \infty} \frac{\ln\brc{S(\gamma n)}}{n}, \quad 
\gamma \in [0, 1], \gamma \neq \frac{1}{2}, 
\end{equation}
then with positive probability a randomly chosen formula has a satisfying assignment, i.e.
\begin{equation}\label{eq:positiveprobbound}
\lim_{n \to \infty} \prob\brc{N(\alpha, n) > 0} \geq  \frac{2 \sqrt{|B_f(x_k, x_k^2, x_k)|}}{\sigma_s b_q(x_k) \sqrt{k}}, 
\end{equation}
where $x_k$ is the solution of the saddle point equation $a_q(x) = \frac{k}{2}-1$ defined in Lemma 
\ref{lem:moment1hayman}, $a_q(x)$ and $b_q(x)$ are defined according to (\ref{eq:defaqbq}), $B_f(x_k, x_k^2, x_k)$ 
is defined as in Theorem \ref{thm:multisaddle}, and the ``normalized variance'' $\sigma_s^2$ of the 
summation term around $S\brc{\frac{n}{2}}$ is given by 
\begin{equation}\label{eq:defsigmas}
\sigma_s^2 = \frac{1}{4+ \frac{k \dr}{2} ([-1, 1, -1]\cdot B_f(x_k, x_k^2, x_K)^{-1}\cdot [-1, 1, -1]^T) - 8 \dr}. 
\end{equation}
Let $\dr^*$ be the largest literal degree for which $S(n/2)$ is the dominant term, i.e. (\ref{eq:snbytwodominant}) holds, 
then the threshold $\alpha^*$ is lower bounded by 
$
	\alpha^* \geq \alpha_l^* \triangleq \frac {2 \dr^*}{k}.
$
\end{theorem}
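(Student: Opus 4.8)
The plan is to bound $\prob(N(\alpha,n)>0)$ from below by $E(N)^2/E(N^2)$ and to evaluate this ratio asymptotically. Writing $E(N^2)=\sum_{i=0}^n S(i)$ as in (\ref{eq:moment2deg1}), the first task is to pin down the peak term $S(n/2)$ and to show that its exponential growth rate is exactly twice that of the first moment. At overlap $n/2$ the target exponent is $\un i=(\dr n/2,\dr n/2,\dr n/2)$, so by Theorem \ref{thm:multisaddle} the relevant saddle point $\un t$ solves $a_f(\un t)=(k/4,k/4,k/4)$, using $\alpha=2\dr/k$. I would check that $\un t=(x_k,x_k^2,x_k)$ is this solution: a direct substitution gives $f(x_k,x_k^2,x_k)=((1+x_k)^k-1)^2=p(x_k)^2$, and each saddle equation $a_{f,i}(x_k,x_k^2,x_k)=k/4$ reduces to the first-moment identity $a_q(x_k)=k/2-1$ of Lemma \ref{lem:moment1hayman}, which is equivalent to $x_k(1+x_k)^{k-1}/p(x_k)=1/2$ and, after rearrangement, to $(1+x_k)^{k-1}(1-x_k)=1$. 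Feeding $f(\un t)=p(x_k)^2$ and $\un t^{\un i}=x_k^{k\alpha n}$ into Theorem \ref{thm:multisaddle}, dividing by the Hayman form (\ref{eq:moment1hayman}) for $E(N)$, and simplifying the factorials by Stirling, I expect $S(n/2)/E(N)^2=\frac{\sqrt k\,b_q(x_k)}{2\sqrt{2\pi n\,|B_f(x_k,x_k^2,x_k)|}}(1+o(1))$. In particular $S(n/2)$ and $E(N)^2$ share the same exponential rate; the dominance hypothesis (\ref{eq:snbytwodominant}) then guarantees that no other overlap contributes a larger rate, so that $E(N^2)$ is controlled by this peak.

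Second, I would pass from the peak to the full sum by a Laplace/local-limit argument. Expanding $\ln S(i)$ to second order about $i=n/2$ (the first-order term vanishes by the saddle/maximum condition) collects three second-order contributions: the binomial $\binom ni$, whose rate has $h''(1/2)=-4$; the four factorials $((\dr(n-i))!)^2((\dr i)!)^2$, which contribute $+8\dr$ through Stirling; and the coef-term, to which the local limit theorem of Theorem \ref{thm:multisaddle} applies in the direction $\dr(-1,1,-1)$ along which $\un i$ moves with $i$, contributing $-\tfrac{\dr k}{2}([-1,1,-1]B_f^{-1}[-1,1,-1]^T)$ after using $\dr^2/(2\alpha)=\dr k/4$. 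Here the prefactor $\un t^{\un i-\un j}$ in the local limit theorem equals $1$, because $(x_k,x_k^2,x_k)$ is balanced along $(-1,1,-1)$. Assembling these three pieces gives $S(i)\approx S(n/2)\exp(-(i-n/2)^2/(2\sigma_s^2 n))$ with $\sigma_s^2$ exactly the expression in (\ref{eq:defsigmas}). Summing this Gaussian over integers then yields $E(N^2)\approx S(n/2)\sqrt{2\pi n}\,\sigma_s$, and dividing $E(N)^2$ by it reproduces the constant in (\ref{eq:positiveprobbound}).

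The step I expect to be the main obstacle is making the Laplace approximation rigorous away from the peak. The local limit theorem in Theorem \ref{thm:multisaddle} is only valid for $\|\un u\|=O((\ln n)^{1/3})$, i.e.\ $|i-n/2|=O(\sqrt n(\ln n)^{1/3})$, so I would split $\sum_i S(i)$ into a central Gaussian window, handled by the expansion above, and a remainder. In the far tail, where $\gamma=i/n$ is bounded away from $1/2$, the dominance hypothesis (\ref{eq:snbytwodominant}) makes $S(\gamma n)$ exponentially smaller than $S(n/2)$, so the whole tail is negligible; the intermediate range requires the strict concavity of the rate at $\gamma=1/2$, equivalently positivity of the denominator of (\ref{eq:defsigmas}) ($\sigma_s^2>0$), which must be verified to hold under the hypothesis. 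Once $\prob(N>0)$ is bounded below by the positive constant in (\ref{eq:positiveprobbound}) (existence w.u.p.p.), the threshold bound follows: taking $\dr^*$ to be the largest literal degree for which (\ref{eq:snbytwodominant}) holds shows a satisfying assignment exists w.u.p.p.\ at $\alpha=2\dr^*/k$, and the assumed Friedgut-type transfer from w.u.p.p.\ to w.h.p.\ for the regular model then gives $\alpha^*\ge 2\dr^*/k$.
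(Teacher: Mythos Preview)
Your proposal is correct and follows essentially the same route as the paper's proof: identify $S(n/2)$ via the saddle point $(x_k,x_k^2,x_k)$ and $f(x_k,x_k^2,x_k)=p(x_k)^2$, observe that its exponential rate equals twice that of $E(N)$, then apply a Laplace/local-limit expansion around $i=n/2$ to obtain $E(N^2)\sim S(n/2)\sqrt{2\pi n}\,\sigma_s$ with the stated $\sigma_s^2$, and conclude by the second moment inequality. Your exposition is in fact slightly more explicit than the paper's in two places (the direct verification that $(x_k,x_k^2,x_k)$ solves $a_f=(k/4,k/4,k/4)$ via the identity $(1+x_k)^{k-1}(1-x_k)=1$, and the observation that the local-limit prefactor $\un t^{\,\un i-\un j}$ vanishes because $(x_k,x_k^2,x_k)$ is balanced along $(-1,1,-1)$), and you flag the tail-splitting and the positivity of $\sigma_s^2$ that the paper treats implicitly; but these are refinements of the same argument, not a different one.
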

\begin{proof} 
From (\ref{eq:moment2deg1}) and Theorem \ref{thm:multisaddle}, the growth rate of $S(\gamma n)$ is given by, 
\begin{multline}\label{eq:defsgamma}
s(\gamma) \triangleq \lim_{n \to \infty} \frac{\ln\brc{S(\gamma n)}}{n} = \\ (1-k \alpha) (\ln(2) + h(\gamma))
+ \alpha \ln\brc{f(t_1, t_2, t_3)} - \dr (1-\gamma) (\ln(t_1)+\ln(t_3)) -\dr \gamma \ln(t_2),  
\end{multline}
where $t_1, t_2, t_3$ is a positive solution of the saddle point equations as defined in 
Theorem~\ref{thm:multisaddle}, 
\begin{multline}\label{eq:saddlepoint}
a_f(\un{t}) \triangleq \left\{ t_1 \frac{\partial \ln\brc{f(t_1, t_2, t_3)}}{\partial t_1}, 
\quad t_2 \frac{\partial \ln\brc{f(t_1, t_2, t_3)}}{\partial t_2}, 
\quad t_3 \frac{\partial \ln\brc{f(t_1, t_2, t_3)}}{\partial t_3} \right\} = \\
\left\{\frac{k}{2} (1-\gamma), \frac{k}{2} \gamma, \frac{k}{2} (1-\gamma) \right\}. 
\end{multline}
In order to compute the maximum exponent of the summation terms, we compute its derivative and equate it to zero, 
\begin{equation}\label{eq:derivatives}
\frac{d s(\gamma)}{d \gamma} = (1-k \alpha) \ln\brc{\frac{1-\gamma}{\gamma}} + \dr \ln(t_1)   
-\dr \ln(t_2) + \dr \ln(t_3) = 0. 
\end{equation}
Note that the derivatives of $t_1, t_2$ and $t_3$ w.r.t. $\gamma$ vanish as they satisfy the saddle point equation. 
Every positive solution $(t_1, t_2, t_3)$ of (\ref{eq:saddlepoint}) satisfies 
$t_1=t_3$  as (\ref{eq:saddlepoint}) and $f(t_1, t_2, t_3)$ are symmetric in 
$t_1$ and $t_3$. If $\gamma = 1/2$ is a maximum, then the vanishing derivative in (\ref{eq:derivatives}) and equality of 
$t_1$ and $t_3$ imply $t_2 = t_1^2$. We substitute $\gamma=1/2$, $t_1=t_3$, and 
$t_2=t_1^2$ in (\ref{eq:saddlepoint}). This reduces (\ref{eq:saddlepoint}) to the saddle point equation corresponding to 
the polynomial $q(x)$ defined in Lemma \ref{lem:moment1hayman} whose solution is 
denote by $x_k$. Then by observing $f(x_k, x_k^2, x_k) = p(x_k)^2$, we have 
\begin{equation}\label{eq:Snbytwo}
S(n/2) = \frac{k^{3/2}}{2^{7/2} \sqrt{\pi n} \sqrt{|B_f(x_k, x_k^2, x_k)|}} 
e^{n \brc{2 \ln(2) (1-k \alpha) + 2 \alpha \ln(p(x_k)) - k \alpha \ln(x_k)}} (1+o(1)). 
\end{equation}
Using the relation that $q(x) = \frac{p(x)}{x}$,  we note that the exponent of $S(n/2)$ is 
twice the exponent of the first moment of the total number of solutions as given in 
(\ref{eq:moment1hayman}). In order to compute the sum over $S(\gamma n)$, we now use 
 Laplace's method, a detailed discussion of which can be found in \cite{Hen74, Deb81, FlS09}.  
We want to approximate the term $S(n/2+\Delta i)$ in terms of $S(n/2)$.  For the coef terms, we make use of the local limit theorem given in 
Theorem \ref{thm:multisaddle} and for the factorial terms we make use of Stirling's approximation. 
This gives,
\begin{equation}\label{eq:sdeltai}
S(n/2 + \Delta i) = S(n/2) e^{-\frac{\Delta i^2}{2 n \sigma_s^2}} (1+o(1)), \quad \text{where }  
\Delta i = O(n^{1/2} ln(n)^{1/3}).
\end{equation} 
Note that in the exponent on the R.H.S. of (\ref{eq:sdeltai}), the linear terms
in $\Delta i$ are absent as the derivative of the exponent vanishes at
$\gamma=1/2$. As the deviation around the term $S(n/2)$ is $\Theta(\sqrt{n})$ and the approximation 
is valid for $\Delta i = O(\sqrt{n} \ln(n)^{1/3})$, the dominant contribution comes from 
$-\Theta(\sqrt{n}) \leq \Delta i \leq \Theta(\sqrt(n))$. 
We are now ready to obtain the estimate for the second moment. 
\begin{eqnarray}
E(N^2(\alpha, n))  & \stackrel{(\ref{eq:sdeltai})}{=} & \sum_{\Delta i= - c \sqrt{n}}^{c \sqrt{n}} S(n/2) e^{-\frac{\Delta i^2}{2 n \sigma_s^2}} (1+o(1)), \\
\label{eq:sumtoint} & = & S(n/2) \int_{\delta=-\infty}^{\infty} e^{-\frac{\delta^2}{2 n \sigma_s^2}} d \delta (1+o(1)). \\ 
\label{eq:moment2approx} & = & S(n/2) \sqrt{2 \pi n \sigma_s^2} (1+o(1)).
\end{eqnarray}
We can replace the sum by an integral by choosing sufficiently large $c$. Using the second moment method given in (\ref{eq:smmethod}) and combining Lemma \ref{lem:moment1hayman}, (\ref{eq:Snbytwo}), and 
(\ref{eq:moment2approx}), we obtain 
\begin{equation}
\prob\brc{N(\alpha, n) > 0}  \geq  \frac{E(N(\alpha, n)^2)}{E(N(\alpha, n)^2)} 
  = \frac{2 \sqrt{|B_f(x_k, x_k^2, x_k)|}}{\sigma_s b_q(x_k) \sqrt{k}} (1+o(1)).
\end{equation}
Letting $n$ go to infinity, we obtain (\ref{eq:positiveprobbound}).  
Clearly, if the supremum of the growth rate of $S(\gamma n)$ is not achieved at $\gamma = 1/2$, then the lower bound given 
by the second moment method converges to zero. This gives the desired lower bound on the threshold. 
\end{proof}

We can easily extend this result to the 2-regular case. In the following theorem we accomplish 
this task. Due to space limitation, we omit explanation of some steps which can be found in \cite{RARS10t}.  
\begin{theorem}\label{thm:2reglowerbound}
Consider the 2-regular random k-SAT model where the number of variables with degree $\dr$ (resp. $\dr+1$) is 
$n_\dr = \Lambda_\dr n$ (resp. $n_{\dr+1} = \Lambda_{\dr+1} n$). Let $S(i_\dr, i_{\dr+1}) \triangleq
S(\gamma_\dr n_\dr, \gamma_{\dr+1} n_{\dr+1})$ be the summation term on the
R.H.S. of (\ref{eq:moment2}) corresponding to overlap $i_\dr$(resp.
$i_{\dr+1}$) on the degree $\dr$(resp. $\dr+1$) literals. Let $g(\gamma_\dr, \gamma_{\dr+1})$ 
be the growth rate of $S(\gamma_\dr n_\dr, \gamma_{\dr+1} n_{\dr+1})$ i.e. 
$g(\gamma_\dr, \gamma_{\dr+1}) \triangleq \lim_{n \to \infty} \frac{\ln\brc{S(n_\dr \gamma_\dr, n_{\dr+1})}}{n}$. 
If 
\[
g\brc{\frac{1}{2}, \frac{1}{2}} >  g(\gamma_\dr, \gamma_{\dr+1}), \gamma_\dr \in [0, 1], \gamma_{\dr+1} \in [0, 1], 
\gamma_\dr \neq \frac{1}{2}, \gamma_{\dr+1} \neq \frac{1}{2}, 
\] 
then with positive probability a randomly chosen formula has a solution. More precisely,
\begin{equation}\label{eq:positiveprobbound2reg}
\lim_{n\to \infty} \prob\brc{N(\alpha, n) > 0} \geq \frac{\sqrt{|B_f(x_k, x_k^2, x_k)| \Lambda_\dr \Lambda_{\dr+1}}}{b_q(x_k) \sqrt{k |\Sigma|}}.
\end{equation}
The definition of $x_k$, $b_q(x_k)$, and $B_f(x_k, x_k^2, x_k)$ is same as in the Theorem \ref{thm:reglowerbound}. The $2 \times 2$ matrix $\Sigma$ is defined 
via, 
 \begin{multline}\label{eq:defSigma}
C_f = [-1, 1, -1].(B_f(x_k, x_k^2, x_k))^{-1}.[-1, 1, -1]^T, \quad 
A = \frac{4}{\Lambda_\dr} + 2 \dr^2 \brc{\frac{C_f}{2 \alpha}-\frac{4}{k \alpha}}, \\
B = \frac{4}{\Lambda_{\dr+1}} + \frac{2 (\dr+1)^2}{\alpha} \brc{\frac{C_f}{2}-\frac{4}{k}}, \hspace*{0.02in}
C = \frac{2 \dr (\dr+1)}{\alpha} \brc{\frac{C_f}{2}-\frac{4}{k}}, \hspace*{0.02in} \text{then} \hspace*{0.02in} \Sigma = \begin{bmatrix}A & B \\ B & C \end{bmatrix}^{-1}. 
\end{multline}
The threshold $\alpha^*$ is lower bounded by $\alpha_l^*$, where $\alpha_l^*$ is defined by
\[
\alpha_l^* = \sup\left\{\alpha: g\brc{\frac{1}{2}, \frac{1}{2}} >  g(\gamma_\dr, \gamma_{\dr+1}), \gamma_\dr \in [0, 1], \gamma_{\dr+1} \in [0, 1], 
\gamma_\dr \neq \frac{1}{2}, \gamma_{\dr+1} \neq \frac{1}{2} \right\}.
\]
\end{theorem}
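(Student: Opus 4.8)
The plan is to carry over the proof of Theorem~\ref{thm:reglowerbound} essentially unchanged, replacing the scalar overlap $\gamma$ by the pair $(\gamma_\dr,\gamma_{\dr+1})$ so that the one-dimensional Laplace sum over $\Delta i$ becomes a two-dimensional one over $(\Delta i_\dr,\Delta i_{\dr+1})$. First I would set $w\triangleq\dr i_\dr+(\dr+1) i_{\dr+1}$, the quantity that appears in (\ref{eq:moment2}) as the $x_2$-exponent and, through $\frac{k\alpha n}{2}-w$, as the common $x_1$- and $x_3$-exponent. Applying Stirling's formula to the two binomials and the factorial block, and the multidimensional saddle point method of Theorem~\ref{thm:multisaddle} to the coef-term, yields a growth rate $g(\gamma_\dr,\gamma_{\dr+1})$ built from the two binomial entropies $\Lambda_\dr h(\gamma_\dr)+\Lambda_{\dr+1} h(\gamma_{\dr+1})$ together with factorial and coef contributions that depend on $(\gamma_\dr,\gamma_{\dr+1})$ only through the single combination $w$, in direct analogy with $s(\gamma)$ in (\ref{eq:defsgamma}); here $\un{t}=(t_1,t_2,t_3)$ solves $a_f(\un{t})=\un{i}/(\alpha n)$ with target $\un{i}=(\frac{k\alpha n}{2}-w,\,w,\,\frac{k\alpha n}{2}-w)$.

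I would then locate the dominant point. Under the hypothesis $g$ attains its maximum at $(\gamma_\dr,\gamma_{\dr+1})=(1/2,1/2)$, and I verify that this is a stationary point exactly as in (\ref{eq:derivatives}): the derivatives of $t_1,t_2,t_3$ with respect to $\gamma_\dr,\gamma_{\dr+1}$ drop out because $\un{t}$ satisfies the saddle point equations, and the remaining entropy derivatives $\Lambda_d\ln\frac{1-\gamma_d}{\gamma_d}$ vanish at $\gamma_d=1/2$. At $(1/2,1/2)$ one has $w=\frac{k\alpha n}{4}$, so all three target exponents coincide; by the symmetry of $f$ in $x_1$ and $x_3$ the saddle point satisfies $t_1=t_3$ and $t_2=t_1^2$, which collapses $a_f$ to the scalar equation $a_q(x)=\frac{k}{2}-1$ of Lemma~\ref{lem:moment1hayman} with solution $x_k$, and $f(x_k,x_k^2,x_k)=p(x_k)^2$. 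Consequently $g(1/2,1/2)$ equals twice the first-moment exponent of (\ref{eq:moment1hayman}), in exact parallel with (\ref{eq:Snbytwo}).

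The one genuinely new step is the two-dimensional second-order expansion that produces $\Sigma$. I would Taylor-expand $g$ about $(1/2,1/2)$; the resulting quadratic form in $(\Delta i_\dr,\Delta i_{\dr+1})$ has $\Sigma^{-1}$ as its matrix (with the $n$-scaling of (\ref{eq:sdeltai})), whose diagonal entries $A,B$ and off-diagonal entry $C$ are those of (\ref{eq:defSigma}). Because the two binomial entropies are the only terms depending on $\gamma_\dr$ and $\gamma_{\dr+1}$ separately, they contribute purely to the diagonal: through $h''(1/2)=-4$ and the normalization $n_d=\Lambda_d n$ they supply the $4/\Lambda_\dr$ and $4/\Lambda_{\dr+1}$ pieces of $A$ and $B$. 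All the remaining dependence enters through the single combination $w$, so its contribution to the Hessian in $(i_\dr,i_{\dr+1})$ is rank one, equal to the curvature in $w$ times the outer product of $(\dr,\dr+1)$; this is exactly what distributes the factors $\dr^2$, $\dr(\dr+1)$, $(\dr+1)^2$ onto $A$, $C$, and $B$. The curvature in $w$ has two sources: the Stirling curvature of the $(\frac{k\alpha n}{2}-w)!$ and $w!$ factorials, which furnishes the $-\frac{4}{k\alpha}$ and $-\frac{4}{k}$ pieces, and the coef-term, where a shift $\delta w$ moves the target exponent $\un{i}$ in the fixed direction $[-1,1,-1]$, so the local limit theorem of Theorem~\ref{thm:multisaddle} contributes $\exp\!\brc{-\tfrac12\,\un{u}\cdot B_f(x_k,x_k^2,x_k)^{-1}\cdot\un{u}^{T}}$ with $\un{u}\propto\delta w\,[-1,1,-1]$; contracting $B_f^{-1}$ against $[-1,1,-1]$ is precisely $C_f$. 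Assembling these reproduces $A$, $B$, $C$ and hence $\Sigma$.

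Finally I would replace the double sum by a two-dimensional Gaussian integral over $(\Delta i_\dr,\Delta i_{\dr+1})$, which evaluates to
\begin{equation*}
E\brc{N(n,\alpha)^2}=S\brc{\tfrac{n_\dr}{2},\tfrac{n_{\dr+1}}{2}}\,2\pi n\sqrt{|\Sigma|}\,(1+o(1)),
\end{equation*}
in direct analogy with (\ref{eq:sdeltai})--(\ref{eq:moment2approx}). Dividing the square of the first moment from Lemma~\ref{lem:moment1hayman} by this quantity, and using that the exponent of $S(n_\dr/2,n_{\dr+1}/2)$ is twice that of the first moment so the exponential factors cancel, the polynomial prefactors collapse to the constant on the right-hand side of (\ref{eq:positiveprobbound2reg}); positivity of that constant for every $\alpha$ at which $(1/2,1/2)$ strictly dominates then yields $\alpha^*\ge\alpha_l^*$. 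I expect the main obstacle to be the bookkeeping in this Hessian computation, namely propagating the implicit dependence of $(t_1,t_2,t_3)$ on $(\gamma_\dr,\gamma_{\dr+1})$ through the chain rule and the inverse matrix $B_f^{-1}$ to obtain the off-diagonal entry $C$ and the correction terms, while keeping the normalizations $n_d=\Lambda_d n$ consistent at every stage.
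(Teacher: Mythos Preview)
Your proposal is correct and follows essentially the same route as the paper: introduce the combined overlap $w=\dr i_\dr+(\dr+1)i_{\dr+1}$ (the paper writes this as $\Gamma(\gamma_\dr,\gamma_{\dr+1})=w/n$), compute the growth rate $g$ via Stirling and Theorem~\ref{thm:multisaddle}, identify the stationary point $(\tfrac12,\tfrac12)$ where the saddle collapses to $t_1=t_3=x_k$, $t_2=x_k^2$, expand to second order to obtain $\Sigma^{-1}$, and replace the double sum by a two-dimensional Gaussian integral. Your explanation of why $\Sigma^{-1}$ has the form in~(\ref{eq:defSigma})---the binomial entropies contributing purely diagonally and the remaining $w$-only dependence contributing a rank-one piece in the outer product of $(\dr,\dr+1)$, with the $C_f$ factor arising from contracting $B_f^{-1}$ against $[-1,1,-1]$---is in fact more detailed than what the paper gives, which simply states~(\ref{eq:ratio2deg}) and refers to a technical report.

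One small point: when you check stationarity at $(\tfrac12,\tfrac12)$ you say ``the remaining entropy derivatives vanish,'' but in the $2$-regular case the derivative of $g$ also contains the factorial and coef pieces depending on $w$ (cf.\ the paper's~(\ref{eq:derivative2deg})); you should note that their $w$-derivative vanishes at $w=\tfrac{k\alpha n}{4}$ as well, since there $\tfrac{k\alpha}{2}-\Gamma=\Gamma$ and $t_1 t_3/t_2=1$. This is implicit in your later symmetry remark but worth stating where you verify the critical point.
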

\begin{proof}
Define $\Gamma(\gamma_\dr, \gamma_{\dr+1}) = 
\dr \Lambda_\dr \gamma_\dr + (\dr+1) \Lambda_{\dr+1} \gamma_{\dr+1}$. Then by using Theorem \ref{thm:multisaddle} and Stirling's approximation, we obtain  
\begin{multline}
g(\gamma_\dr, \gamma_{\dr+1})  = \ln(2) +  
\Lambda_\dr h(\gamma_\dr) + \Lambda_{\dr+1} h(\gamma_{\dr+1})  \\ 
+ \brc{k \alpha - 2 \Gamma(\gamma_\dr, \gamma_{\dr+1})} \ln\brc{\frac{k \alpha}{2} - \Gamma(\gamma_\dr, \gamma_{\dr+1})} \\ 
+ 2 \brc{\Gamma(\gamma_\dr, \gamma_{\dr+1})} \ln\brc{\Gamma(\gamma_\dr, \gamma_{\dr+1})}  
- k \alpha \ln\brc{k \alpha} \\ + \alpha \ln\brc{f(\un(t))} 
- \brc{\frac{k \alpha}{2} - \Gamma(\gamma_\dr, \gamma_{\dr+1})} \ln\brc{t_1 t_3} - \brc{\Gamma(\gamma_\dr, \gamma_{\dr+1})} \ln(t_2),
\end{multline}
where $\un{t}=\{t_1, t_2, t_3\}$ is a positive solution of the saddle point point equations as given in Theorem \ref{thm:multisaddle},  
\begin{equation}\label{eq:saddlepoint2deg}
a_f(\un{t}) = \left\{ \frac{k}{2} - \frac{\Gamma(\gamma_\dr, \gamma_{\dr+1})}{\alpha}, \frac{\Gamma(\gamma_\dr, \gamma_{\dr+1})}{\alpha}, 
 \frac{k}{2} - \frac{\Gamma(\gamma_\dr, \gamma_{\dr+1})}{\alpha} \right\}, 
\end{equation}
corresponding to the coefficient term of power of $f(x_1, x_2, x_3)$. 
In order to obtain the maximum exponent, we take the partial derivatives of $g(\gamma_\dr, \gamma_{\dr+1})$ with 
respect to $\gamma_{\dr}$ and $\gamma_{\dr+1}$ and equate them to zero. This gives the following equations.
\begin{multline}\label{eq:derivative2deg}
\ln\brc{\frac{1-\gamma_\dr}{\gamma_\dr}} - 2 \dr \ln\brc{\frac{k \alpha}{2}-\Gamma(\gamma_\dr, \gamma_{\dr+1})} + 
2 \dr \ln\brc{\Gamma(\gamma_\dr, \gamma_{\dr+1})} + \dr \ln\brc{\frac{t_1 t_3}{t_2}} = 0, \\ 
\hspace*{-0.06in} \ln\brc{\frac{1-\gamma_{\dr+1}}{\gamma_{\dr+1}}} - 2 (\dr+1) \ln\brc{\frac{k \alpha}{2}-\Gamma(\gamma_\dr, \gamma_{\dr+1})} \\
+ 2 (\dr+1) \ln\brc{\Gamma(\gamma_\dr, \gamma_{\dr+1}} + (\dr+1) \ln\brc{\frac{t_1 t_3}{t_2}}  = 0.
\end{multline}
Note that $t_1=t_3=x_k$, $t_2=x_k^2$, $\gamma_\dr=\frac{1}{2}$, and $\gamma_{\dr+1} = \frac{1}{2}$ is a solution of 
(\ref{eq:saddlepoint2deg}), (\ref{eq:derivative2deg}), which corresponds to $S\brc{\frac{n_\dr}{2}, \frac{n_{\dr+1}}{2}}$, 
where 
$x_k$ is the solution of the saddle point equation corresponding to $q(x)$ defined in Lemma \ref{lem:moment1hayman}. 
We recall that for the second moment method to work, the maximum exponent should be equal to twice the 
exponent of the average number of solutions. Indeed by the proposed solution, the term $S\brc{\frac{n_\dr}{2}, \frac{n_{\dr+1}}{2}}$ has an exponent 
which is twice that of the average number of solutions. If this is also the maximum, then we have the desired 
result. Assuming that $S\brc{\frac{n_\dr}{2}, \frac{n_{\dr+1}}{2}}$ has the maximum exponent, we now compute the 
second moment of the total number of solutions. By using Stirling's approximation and the local limit result of 
Theorem \ref{thm:multisaddle}, we obtain
\begin{equation}\label{eq:ratio2deg}
\frac{S\brc{i_{\dr}+\Delta i_\dr, i_{\dr+1} + \Delta i_{\dr+1}}}{S\brc{i_\dr, i_{\dr+1}}} = 
e^{- \frac{1}{2 n} [\Delta i_\dr, \Delta i_{\dr+1}]\cdot\Sigma^{-1}\cdot[\Delta i_\dr, \Delta i_{\dr+1}]^T} (1+o(1)), 
\end{equation}
where the matrix $\Sigma$ is defined in (\ref{eq:defSigma}). Using the same series of arguments as in Theorem \ref{thm:reglowerbound}, we obtain  
\begin{align}
E\brc{N(\alpha, n)^2} = \sum_{\Delta i_\dr, \Delta i_{\dr+1}} S\brc{\frac{n_\dr}{2}, \frac{n_{\dr+1}}{2}} e^{- \frac{1}{2 n} [\Delta i_\dr, \Delta i_{\dr+1}]\cdot\Sigma^{-1}\cdot[\Delta i_\dr, \Delta i_{\dr+1}]^T} (1+o(1)), \\
 = S\brc{\frac{n_\dr}{2}, \frac{n_{\dr+1}}{2}}  \int_{-\infty}^\infty \int_{-\infty}^\infty e^{- \frac{1}{2 n} [x_\dr, x_{\dr+1}]\cdot\Sigma^{-1}\cdot[x_\dr, x_{\dr+1}]^T} dx_\dr dx_{\dr+1}(1+o(1)), \\
= \frac{\sqrt{|\Sigma|} k^{\frac{3}{2}}}{4 \sqrt{|B_f(x_k, x_k^2, x_k)| \Lambda_\dr \Lambda_{\dr+1}}} 
e^{2 n \brc{\ln(2) - k \alpha \ln(2) + \alpha \ln\brc{p(x_k)}-\frac{k \alpha}{2} \ln\brc{x_k}}} (1+o(1)).
\end{align}
By using the second moment method, we obtain the bound given in (\ref{eq:positiveprobbound}). 
Note that the second moment method fails if the term $S\brc{\frac{n_\dr}{2}, \frac{n_{\dr+1}}{2}}$ is not the 
dominant term. This gives the lower bound $\alpha^*_l$ on $\alpha^*$. 
\end{proof}

In the next section we discuss the obtained lower and upper bounds on the satisfiability 
threshold and NAE-satisfiability threshold. 
\section{Bounds on Threshold}
In Table \ref{tb:regthreshold}, lower bounds and upper bounds for the 
satisfiability threshold are given.  The upper bound is computed by the first
moment method. As expected, we obtain the same upper bound for regular random
$3$-SAT as given in \cite{BDIS05}. The lower bound is derived by the second
moment method for strictly regular random $k$-SAT. In order to apply the
second moment method, we have to verify that $s(\gamma)$, defined in
(\ref{eq:defsgamma}), attains its maximum at $\gamma=\frac{1}{2}$ over the unit
interval.  This requires that $\gamma=\frac{1}{2}$ is a positive 
solution of the system of equations consisting of (\ref{eq:saddlepoint}) and 
(\ref{eq:derivatives}) and it corresponds to a global maximum over $\gamma \in [0, 1]$. Also, $\sigma_s^2$ defined in
(\ref{eq:defsigmas}) should be positive. The system of equations
(\ref{eq:saddlepoint}), (\ref{eq:derivatives}) is equivalent to a system of
polynomial equations. For small value of $k$, we can solve this system of
polynomial equations and verify the desired conditions. In Table 
\ref{tb:regthreshold} this has been done for $k=3, 4$. The obtained lower bound 
for $3$-SAT is $2.667$ which is an improvement over the algorithmic lower bound 
$2.46$ given in \cite{BDIS05}. For larger values of
$k$, the degree of monomials in (\ref{eq:derivatives}) grows exponentially in
$k$. Thus, solving (\ref{eq:saddlepoint}) and  (\ref{eq:derivatives}) becomes
computationally difficult. However, $s(\gamma)$ can be easily computed as its
computation requires solving only (\ref{eq:saddlepoint}), where the maximum
monomial degree is only $k$. Thus, the desired condition for maximum of
$s(\gamma)$ at $\gamma=\frac{1}{2}$ can be verified numerically in an efficient
manner.  

Note that the difference between the lower bound obtained by applying the 
second moment method to the strictly regular case can differ by at most $2/k$ 
from the corresponding lower bound for the $2$-regular case. We observe that 
as $k$ increases the lower bound seems to converge to $2^k \ln(2) - (k+1) \frac{\ln(2)}{2}-1$, 
which is the lower bound for the uniform model. 

We observe similar behavior for the NAE-satisfiability bounds. As expected, we observe that the upper bound on the NAE-satisfiability threshold for the regular random model 
converges to $2^{k-1} \ln(2)-\frac{\ln(2)}{2}$. The lower bound obtained by applying the 
second moment method to the regular random model seems to converge to the value obtained for 
the uniform model. Thus, the NAE-threshold for the regular random model is  $2^{k-1} \ln(2)-O(1)$. 
This suggests that as $k$ increases, the threshold of the regular model does not differ much from the uniform model. 
\begin{table}[h]
\begin{center}
\begin{tabular}{|c|c|c|c|c|c|c|c|}
\hline 
$k$ & $\dr^*$ & $\alpha_l^*$ & $\alpha_u^*$ & $\alpha_{l, \mathrm{uni}}^*-\alpha_l^*$ & $\dr^*_{\mathrm{\tiny NAE}}$ & $\alpha_{l, \mathrm{\tiny NAE}}^*$ & $\alpha_{u, \mathrm{\tiny NAE}}^*$\\
\hline
$3$ & $4$ & $2.667$ & $3.78222$ & $0.492216$ & $3$ & $2$ & $2.40942$\\
\hline
$4$ & $16$ & $8$ & $9.10776$ & $0.357487$  & $8$ & $4$ & $5.19089$\\
\hline
$7$ & $296$ & $84.571$ & $85.8791$ & $0.378822$ & $152$ & $43.4286$ & $44.0139$\\
\hline
$10$ & $3524$ & $704.8$ & $705.9533$ & $0.170403$ & $1770$ & $354$ & $354.545$\\
\hline
$15$ & $170298$ & $22706.4$ & $22707.5$ & $0.101635$ & $85167$ & $11355.6$ & $11356.2$\\
\hline
$17$ & $772182$ & $90844.94$ & $90845.9$ & $0.007749$ & $386114$ & $45425.2$ & $45425.7$\\
\hline
\end{tabular} 
\end{center}
\caption{Bounds on the satisfiability threshold for strictly regular random
$k$-SAT. $\alpha^*_l$ and $\dr^*$ are defined in Theorem
\ref{thm:reglowerbound}. The upper bound $\alpha_u^*$ is obtained by the first moment method. 
$\alpha_{l, \mathrm{uni}}^* = 2^k \ln(2) - (k+1) \frac{\ln(2)}{2}-1$ is lower bound for uniform model obtained in \cite{AcP04}.
The quantities $\dr_{\mathrm{\tiny NAE}}^*, \alpha_{l, \mathrm{\tiny NAE}}, \alpha_{u, \mathrm{\tiny NAE}}$ are analogously defined 
for the NAE-satisfiability. 
}\label{tb:regthreshold} 
\end{table}

Our immediate future work is to derive explicit lower bounds for the 
regular random $k$-SAT model for large values of $k$ as was done for the 
uniform model in \cite{AcP04, AcM06}. The challenge is that the 
function $s(\gamma)$ depends on the solution of the system of polynomial 
equations given in (\ref{eq:saddlepoint}). Thus determining the maximum requires 
determining the behavior of the positive solution of this system of polynomial 
equations. Another interesting direction is the maximum 
satisfiability of regular random formulas. For the uniform model, the maximum satisfiability 
problem was addressed in \cite{ANP07} using the second moment method. In \cite{RARS10maxsat}, 
authors have derived lower and upper bounds on the maximum satisfiability threshold 
of regular random formulas.  

\bibliographystyle{siam}
\bibliography{ksat} 
\end{document}